\documentclass[submission,copyright,creativecommons]{eptcs}
\usepackage{breakurl}             
\usepackage{underscore}           

\usepackage[utf8]{inputenc}
\usepackage[T1]{fontenc}
\usepackage[english]{babel} 
\usepackage{amsthm}
\usepackage{amssymb,amsmath}
\usepackage{graphics}
\usepackage{graphicx}
\usepackage{subcaption}
\usepackage{algorithm} 
\usepackage{algorithmic}
\usepackage{tikz}
\usetikzlibrary{decorations,arrows,shapes,automata}
\usepackage{enumerate}
\usepackage{thmtools}

\urlstyle{same}\mathchardef\mhyphen="2D

\usepackage[colorinlistoftodos]{todonotes}

\renewcommand\thmcontinues[1]{continued}

\title{Careful Autonomous Agents in Environments With Multiple Common Resources}

\author{
Rodica Condurache 
\institute{``Alexandru Ioan Cuza'' University of Ia\c{s}i, Romania}
\email{rodica.condurache@info.uaic.ro } 
\and
Catalin Dima
\institute{Univ.~Paris Est Cr\'eteil, LACL, F-94010 Creteil, France}
\email{dima@u-pec.fr}
\and
Madalina Jitaru
\institute{Univ.~Paris Est Cr\'eteil, LACL, F-94010 Creteil, France}
\email{madalina.jitaru@u-pec.fr}
\and
Youssouf Oualhadj
\institute{Free University of Bozen-Bolzano, Italy\\
Univ. Paris Est Cr\'eteil, LACL, F-94010 Creteil, France}
\email{youssouf.oualhadj@unibz.it}
\and
Nicolas Troquard
\institute{Free University of Bozen-Bolzano, Italy}
\email{nicolas.troquard@unibz.it}
}

\newcommand{\states}[0]{\mathsf{S}}
\newcommand{\uStates}[0]{\widehat{\states}}
\newcommand{\init}[0]{s_0}
\newcommand{\edges}[0]{\mathsf{E}}
\newcommand{\uEdges}[0]{\widehat{\edges}}
\newcommand{\game}{\mathcal{G}}
\newcommand{\uGame}{\widehat{\game}}

\newcommand{\src}{\mathsf{Src}}
\newcommand{\trgt}{\mathsf{Trgt}}
\newcommand{\pref}[2]{#1[..#2]}
\newcommand{\suff}[2]{#1[#2..]}
\newcommand{\plays}[0]{\mathsf{Plys}}
\newcommand{\hist}[0]{\mathsf{Hst}}
\newcommand{\last}[0]{\mathsf{lst}}
\newcommand{\set}[1]{\{ #1 \}}
\newcommand{\play}[1]{\out{#1}}
\newcommand{\out}[1]{\langle #1 \rangle}
\newcommand{\obj}{\mathsf{Obj}}

\newcommand{\cst}{\mathsf{cost}}
\newcommand{\bcst}{\overline{\mathsf{cost}}}

\newcommand{\bcstfull}{\bcst \colon \edges \to \bbZ^d}
\newcommand{\en}{\mathsf{Energy}}
\newcommand{\multiEn}{\mathsf{MultiEnergy}}

\newcommand{\gameful}{
  \game = 
  \langle\states, (\states_1 \uplus \ldots \uplus\states_{n}), 
  \init, \players, \edges, \ap, \lab\rangle  
}
\newcommand{\players}[0]{\mathsf{P}}
\newcommand{\lab}[0]{\ell}
\newcommand{\prof}[0]{\overline{\sigma}}
\newcommand{\uProf}[0]{\widehat{\prof}}

\newcommand{\fullProf}[0]{\prof = \langle \sigma_1,\ldots, \sigma_n \rangle}
\newcommand{\devProf}[1]{\langle #1 \rangle}
\newcommand{\pay}{\mathsf{Payoff}}
\newcommand{\coa}[1]{{\mhyphen#1}}
\newcommand{\NE}[0]{\mathsf{NE}}

\newcommand{\winone}{W_1}
\newcommand{\wintwo}{W_2}




\newcommand{\bbN}{\mathbb{N}}
\newcommand{\bbZ}{\mathbb{Z}}


\newcommand{\twoexptime}[0]{\mathsf{2EXPTIME}}


\newcommand{\untl}{~\mathsf{U}~}

\newcommand{\nxt}{\mathsf{X}~}
\newcommand{\ap}{\mathsf{AP}}
\newcommand{\true}{\mathsf{True}}

\newtheorem{theorem}{Theorem}
\newtheorem{lemma}[theorem]{Lemma}
\newtheorem{proposition}[theorem]{Proposition}

\newtheorem{example}[theorem]{Example}


\newif{\ifMarginalComments}
\MarginalCommentstrue

\newcommand{\camready}[1]{#1}


\begin{document}
\maketitle

\begin{abstract}
\camready{Careful rational synthesis was defined in  \cite{DBLP:conf/atal/ConduracheDOT21}  as a quantitative extension of Fisman et al.'s rational synthesis \cite{FKL10}, as a model of multi-agent systems} in which agents are interacting in a graph arena in a turn-based 
fashion. There is one common resource, and each action may decrease or increase the resource. Each agent has a temporal qualitative objective and wants to maintain the value of the 
resource positive. One must find a Nash equilibrium. This problem is decidable.

In more practical settings, the verification of the critical properties of multi-agent 
systems calls for models with many resources. Indeed, agents and robots consume and 
produce more than one type of resource: electric energy, fuel, raw material, 
manufactured goods, etc.
We thus explore the problem of careful rational synthesis with several resources.
We show that the problem is undecidable.
We then propose a variant with bounded resources, motivated by the observation that in practical settings, the storage of resources is limited. We show that the problem becomes decidable, and is no harder than controller synthesis with Linear-time Temporal Logic \camready{objectives}.
\end{abstract}

\section{Introduction}
The presence of autonomous agents in modern societies has become commonplace. We interact with them every day, and they may be of different levels of autonomy, e.g., self-checkout, chatbots, robot vacuum cleaners, or virtual assistants.
A current tendency is that agents are intruding on the physical world, and robots are expanding their territory beyond their confined industrial environment.

\medskip
The access to the resources necessary for an agent to accomplish his tasks could have been simply assumed in many application domains before: direct wire to an electricity source, a human operator providing raw material, etc.
Nowadays, typical agents must be more autonomous than before in managing the multiple resources they need. They must carefully consume them, and in presence of competitors, they must also be careful in how they produce them.

\medskip

Linear-time Temporal Logic (LTL)~\cite{DBLP:conf/focs/Pnueli77} has been a very popular logic for specifying temporal properties of systems. 
Planning with objectives expressed in some temporal logic has been well studied~\cite{BACCHUS2000123,Doherty_Kvarnstram_2001,DBLP:conf/kr/BienvenuFM06,CIMATTI2008841}.
Some logics have also been proposed to explicitly verify the properties of multiagent systems in presence of resource constraints~\cite{DBLP:journals/corr/abs-1303-0789,DBLP:journals/logcom/NguyenALR18,DBLP:conf/birthday/AlechinaL20}.
When agents roam more freely the physical world, they are more likely to compete with other agents, human or artificial, which may have conflicting goals. When planning in such environments an agent needs to adapt his behaviour to the capabilities and goals of others. A solution to a multiagent planning problem in this setting is a non-cooperative strategic equilibrium: a vector of strategies, one for each agent, such that no individual agent can be better off by unilaterally changing their strategy. This is what has come to be known as a Nash equilibrium~\cite{Nash51}.

\medskip
This paper aims to contribute to the line of research interested in the formal verification of the existence of Nash equilibria in a multiagent 
system~\cite{ummels08,FKL10,CFGR16,DBLP:journals/apin/AbateGHHKNPSW21}. When there is a solution Nash equilibrium, the techniques used can actually return a multiagent plan that satisfies the requisites. 
The paper has a special focus to consider agents that must be autonomous in an environment with multiple common resources, to bring the theory closer to the reality that engineers are working with. 

\medskip

In~\cite{DBLP:conf/atal/ConduracheDOT21}, the problem of careful rational synthesis is defined as a  quantitative variant of rational synthesis~\cite{FKL10}.
Agents interact in a graph arena in a turn-based fashion. Each state is controlled by one and only one agent who decides which edge to
follow. Each agent has a temporal objective that he tries to achieve. 
There is one integer common resource, and each action may decrease or increase the resource. 
The rational synthesis problem consists in computing a Nash equilibrium that satisfies a 
system objective.
It is shown that in presence of one common resource, deciding the existence of a strategic equilibrium for careful autonomous agents (with \emph{parity} objectives, a canonical representation of temporal properties on infinite traces~\cite{Farwer2002}) can be solved in polynomial space. With LTL objectives, the problem can be solved in doubly exponential space.

\medskip
But in real-case scenarios, physical agents are operating in a world where there is more than one resource. 
In this paper, we explore the problem of careful rational synthesis with several common resources.
\begin{example}[label=ex] Consider the game with $2$ resources illustrated on Figure~\ref{fig:example}.
Players~$1$, $2$ and $3$ control the states $a$, $b$, and $c$ respectively.
The other states are controlled by Player~1 \camready{(but note that the agent who controls them is irrelevant)}. Player~$1$ wants to reach a state with $\bigcirc$, Player~$2$ wants the reach a state with $\Box$, Player~$3$ wants to reach a state with $\Diamond$. All of them want to keep the resources in check: they would be dissatisfied if any of the resources were to go below zero.
The objective of the system is $\bigcirc$. A solution to the synthesis problem is thus a Nash equilibrium that reaches the state $(\bigcirc, \Box)$, and never depletes the resources.

One starts with the resources being $(0,0)$.
Player~$1$ must pump thrice on $a$, which brings the resources to $(6,3)$. (Only he can increase resource two, and at least an amount of $3$ is necessary to reach his objective and the objective of the system.) Player~$1$ can then go to $b$, which brings the resources to $(6,2)$.

At that point, Player~$2$ could go down. 
This would be the outcome of a Nash equilibrium, but it would not be a solution to our synthesis problem since we are seeking an equilibrium 
satisfying the system's objective. Instead,
let Player $2$ go to $c$; this brings the resources vector to $(4, 1)$.

At that point, Player~$3$ can go down. 
Once again this is the outcome of a Nash equilibrium, but this would not be solution.
Instead, Player~$3$ could go right, 
and the run so obtained would satisfy the objective of the system and keep the resources in check. However this is not the outcome of a Nash equilibrium since Player~$3$ can 
deviate and increase his payoff by going down.

In fact, there is no solution to the synthesis problem.
\end{example}

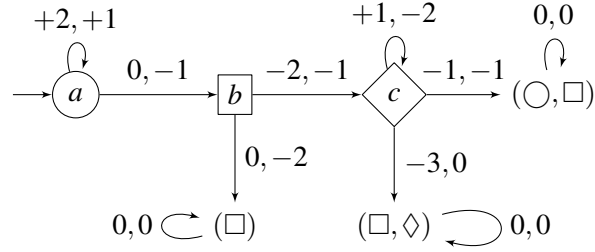
\begin{figure}
\centering
\begin{tikzpicture}[>=latex', join=bevel, initial text = , every node/.style=]
  \node (0) at (0bp, 0bp) [initial left, draw, circle]{$a$};
  \node (1) at (60bp, 0bp) [draw,  rectangle]{$b$};
  \node (11) at (60bp, -50bp) []{$(\Box)$};
  \node (2) at (120bp, 0bp) [draw, diamond]{$c$};
  \node (21) at (120bp, -50bp) []{$(\Box, \Diamond)$};
  \node (22) at (180bp, 0bp) []{$(\bigcirc, \Box)$};
  \draw[->] (0) to node [above] {$0, -1$} (1);
  \draw[->] (1) to node [above] {$-2, -1$} (2);
  \draw[->] (1) to node [right] {$0, -2$} (11);
  \draw[->] (2) to node [right] {$-3, 0$} (21);
  \draw[->] (2) to node [above] {$-1, -1$} (22);
  \draw (0) edge [loop above] node [above] {$+2, +1$} (0);
  \draw (11) edge [loop left] node [left] {$0,0$} ();
  \draw (22) edge [loop above] node [above] {$0,0$} ();
  \draw (21) edge [loop right] node [right] {$0,0$} ();
  \draw (2) edge [loop above] node [above] {$+1, -2$} (1);

\end{tikzpicture}
\caption{\label{fig:example} A $2$-resource $3$-player game.}
\end{figure}

It is unfortunately a negative result that we must report in Section~\ref{sec:undecidability}. Deciding the existence of a strategic equilibrium for careful autonomous agents in environments with multiple common resources is indeed undecidable.

We then propose in Section~\ref{sec:bounded-resources} a variant with bounded resources. In this setting, every resource has a maximum capacity.
%

\begin{example}
Suppose now that both resources are bounded with bounds $(3,3)$.

As before, Player~$1$ must pump thrice on $a$, with the resources values being successively $(2,1)$, $(3,2)$, and $(3,3)$. 
As before, Player~$2$ could win in $b$ by going down, and again 
this would be the outcome of a Nash equilibrium but would not be a solution.
Instead, let Player~$2$ go to $c$, which brings the resources to $(1, 2)$.

\camready{To achieve his goal, in $c$,} Player~$3$ must go down.
\camready{To keep the first resource above zero, he must pump on $c$ twice, thus bringing the value of the first resource to $3$. But doing so
he would deplete the second resource. If Player~$3$ instead carefully moves to the right, Player~$1$ and Player~$2$ meet their objectives, and so does the system.
Hence this outcome results in a Nash equilibrium, that is, a solution.}

To summarize, when the resources are bounded with bounds $(3,3)$, the strategies of Player~$1$ taking the self-loop thrice, then going to $b$, Player~$2$ going to $c$, and Player~$3$ going to $(\bigcirc, \Box)$, form a Nash equilibrium which is a solution to the synthesis problem.
\end{example}

This variant with bounded resource storage capacity is of interest for the practical engineering of autonomous multiagent systems for two reasons. 
The first reason is conceptual. In many real-case scenarios, resources are bounded: e.g., in a community, a shared tank of water can only contain a predetermined amount of water, a shared microgrid powerpack can only contain a predetermined amount of energy, etc.
The second reason is algorithmic. We will show that unlike in the setting with unbounded resources, the problem of rational synthesis in this bounded setting becomes decidable.
Even better, with objectives expressed in LTL, it is not harder than the plain reactive synthesis problem, which is $\twoexptime$-complete~\cite{PR89}.


\section{Games on finite graphs}
\label{sec:games}

For any set $Q$ we denote by $Q^*$ the set of finite sequences of elements
in $Q$ and $Q^\omega$ the set of infinite sequences of elements of $Q$.
Let $w\in Q^*\cup Q^\omega$, and $i\ge 1$, 
we denote by $w[i]$ the $i$-th element in $w$; we
denote by $\pref{w}{i}$ the prefix of $w$ of size $i$ and $\suff{w}{i}$
the suffix that starts at the $i$-th letter. For an element $q\in Q^*$,
$\last(q)$ is the last element in the sequence $q$.

\subsection{Arenas, strategies and profiles}
\paragraph{Multi-player arenas}
A \emph{multi-player arena} is a tuple
$\gameful$, where $\states$ is a finite
set of states, $(\states_1 \uplus \ldots\uplus \states_{n})$ is a
partition of $\states$, $\init$ is an initial state,
$\players = \set{1,\ldots, n}$ is the set of players,
$\edges$ is an edge relation
in $\states\times\states$, $\ap$ is the set of labels (atomic propositions), and $\lab \colon \states \to 2^{\ap}$ is the labeling function. 
For every edge $e = (s,t), \src(e)$ is $s$ and $\trgt(e)$ is $t$.

\paragraph{Plays and strategies}
For an arena $\game$, we denote the plays of this game by $\plays(\game)$ that is
the set of elements $\init s_1 s_2 \ldots$ in $\states^\omega$
such that for all $i \ge 0$, $(s_i,s_{i+1})$ is in $\edges$.
The set $\hist(\game)$ is the set of prefixes of elements in $\plays(\game)$.
Moreover $\hist_i(\game)$ for $i$ in $\players$ is the set of elements
in $\hist(\game)$ whose last element is in $\states_i$:
\[
  \hist_i(\game) =
  \set{
    h \in \hist(\game) \mid
    \last(h) \in \states_i
  }
\]
A \emph{strategy} for player~$i$ is a function
\[
  \sigma_i \colon \hist_i(\game) \to \states 
\]
mapping a history whose last element is $s$ to a state $s'$ such that $(s, s') \in \edges$.
For a strategy $\sigma_i$ for player~$i$, 
we define the set $\out{\sigma_i}$ as the
set of plays that are \emph{compatible} with $\sigma_i$ i.e.,
\begin{align*}
  \out{\sigma_i} =
  \set{
    \pi \in \plays(\game) 
    \mid
    \forall j \ge 0,\ 
    \pref{\pi}{j} \in \hist_i(\game)
    \implies
    \sigma_i(\pref{\pi}{j}) = \pi[j+1]
  }
\end{align*}

\paragraph{Profile of strategies}
Once a strategy $\sigma_i$ for each player $i$ is chosen,
we obtain a strategy profile $\fullProf$. 
\camready{Note that a strategy profile has a unique play in its outcome.}
$\prof_{\coa{i}}$ is the corresponding partial profile
without the strategy for player $i$.
For a strategy $\sigma'_i$ for a player $i$, we write
$\devProf{\prof_{\coa{i}}, \sigma'_i}$ the profile
$\devProf{\sigma_1, \ldots, \sigma'_i, \ldots, \sigma_n}$.
We denote by $\play{\prof}$ the unique outcome of the strategy profile $\prof$.

\subsection{Objectives and payoffs}
An objective $\obj$ is a subset of $\plays(\game)$.
We write $\obj_i$ to specify that it is the objective of player $i$.
We define the payoff $\pay_i(\prof)$ of player $i$ wrt.\ the profile $\prof$ as follows:
\begin{align*}
  \pay_i(\prof) =
  \begin{cases}
    1 \text{ if } \play{\prof} \in \obj_i\\
    0 \text{ otherwise}
  \end{cases}
\end{align*}





\paragraph{LTL objective}
We describe specifications using the Linear-time Temporal Logic (LTL). An LTL specification is a formula $\phi$ defined using the following grammar:
\begin{align*}
  \phi & ::= \alpha \mid
  \lnot \phi \mid
  \phi \lor \phi \mid
  \nxt \phi \mid
  \phi  \untl\phi
  \ .
\end{align*}
where $\alpha$ is in $\ap$. \camready{As usual we denote with $\Diamond$ the ``finally''
  operator, defined as $\Diamond \phi = true \untl\phi$.}

LTL formulas are evaluated over plays as follows:
\begin{align*}
  \rho &\models \alpha \text{ iff } \alpha \in \ell(\rho[0]) & 
  \rho & \models \lnot \phi \text{ iff } \rho \not\models \phi \qquad
  \rho \models \phi\lor\psi  \text{ iff } \rho \models \phi \text{
    or } \rho \models \psi 
\ ,\\
  \rho & \models \nxt \phi \text{ iff } \rho[1..] \models \phi
    &
  \rho & \models \phi \untl \psi \text{ iff } \exists i \geq 0,~
         \rho[i..] \models \psi \text{ and } \forall 0 \leq j < i,~
         \rho[j..] \models \phi\ ,&&
\end{align*}
where $\rho \in \states^\omega, \alpha\in\ap, \phi \in \text{LTL} \text{, and
}\psi \in \text{LTL}$.

For an LTL formula $\phi$, we define the set $\out{\phi}$ as the set of plays 
satisfying $\phi$, i.e.,
\begin{align*}
    \out{\phi} = 
    \set{
    \rho \in \states^\omega
    \mid
    \rho \models \phi
    }
\end{align*}
When the objectives are described as LTL formulas, a play $\rho$ satisfies the objective
of player $i$ if $\rho \in \out{\obj_i}$. Similarly we will sometimes write
$\obj_i$ to denote the set $\out{\obj_i}$.

In the sequel, we will use the temporal modality $\Diamond \phi$ as a shortcut for the formula $(\true \untl \phi)$.

\paragraph{Energy objectives}
Let $\cst\colon \edges \to \bbZ$ be a cost function.
To lighten the notation, we write $\cst(s,t)$ instead of $\cst((s,t))$.
Let $h = \init s_1 \ldots s_n $ be a history in
$\hist(\game)$; we abusively write $\cst(h)$ to mean the extension of
$\cst$ to histories that is
\begin{align*}
  \camready{\cst(h) = \sum_{i = 0}^{n-1} \cst(s_i,s_{i+1})}
\end{align*}
The energy objective for a game $\game$ equipped with a cost function $\cst$
is given by the set $\en$ described as follows:
\begin{align*}
  \en(\game) = 
  \set{
    \pi \in \plays(\game) 
    \mid
    \forall i \ge 0,\
    \cst(\pref{\pi}{i}) \ge 0
  }
\end{align*}

Throughout the paper, values of $\cst$ are encoded in
binary.

\paragraph{Multi-energy objectives}
Let $\bcstfull$ be a multi dimensional cost function. 
We extend $\bcst$ as expected to histories.

The multi-energy objective for a game $\game$ equipped with a multi dimensional cost function $\bcst$
is given by the set $\multiEn$ described as follows:
\begin{align*}
  \multiEn(\game) = 
  \set{
    \pi \in \plays(\game) 
    \mid
    \forall i \ge 0,\
    \bcst(\pref{\pi}{i}) \ge (0, \ldots, 0)
  }
\end{align*}
We will denote by $\cst_i$ the function obtained by projecting $\bcst$ over the $i$-th dimension.



\subsection{Solution concept}
We define in our setting the notion of equilibrium introduced by
Nash. A Nash equilibrium is a profile of strategies in which no player
could do better by unilaterally changing his strategy, provided that
the other players keep their strategies unchanged. 
The set of all the Nash equilibria in a game is denoted $\NE$.

\paragraph{Nash equilibria}
For a multi-player game $\gameful$ with objectives
$\obj_1,\linebreak\ldots, \obj_n$ for each player,
a profile $\fullProf$ is a \emph{Nash equilibrium} ($\NE$)
if for every player~$i$ and every strategy
$\sigma'_i$ for $i$ the following holds true:
\begin{align*}
  \pay_i(\prof) \ge 
  \pay_i(\devProf{\prof_{\coa{i}},\sigma'_i})
\end{align*}

\noindent
\camready{
Equivalently for each player $i$ and for each strategy $\sigma'_i$, if 
$\devProf{\prof_{\coa{i}},\sigma'_i} \in \obj_i$ then
  $\devProf{\prof} \in \obj_i$.}

\subsection{Rational synthesis in the commons}

\paragraph{Careful cooperative rational synthesis}
Let $\gameful$ be a game, $\bcstfull$ be a multi dimensional cost function,
objectives $\obj_1, \ldots, \obj_n$, $\obj$ a global specification and let
$\prof$ be a strategy profile. Then $\prof$ is a solution to the
careful cooperative rational synthesis problem if:
\begin{align*}
  & \out{\prof} \in \multiEn(\game) \cap \obj, \text{and }
    \forall \sigma'_i \text{ a strategy for player }i, \\
  & \devProf{\sigma_{-i},\sigma'_i} \in
  \obj_i \cap \multiEn(\game) \implies 
    \out{\prof} \in \obj_i
\end{align*}

\pagebreak
\section{Undecidability}
\label{sec:undecidability}

We present multi-counter automata and the problem of reachability which is undecidable. We reduce it to the problem of careful cooperative rational synthesis.

\subsection{Multi-counter automata}
A \emph{$n$-counter automaton} $\Gamma$ is a tuple $(L, \delta, l_0)$ where 
$L$ is a finite set of locations, $\delta$ is a set of transitions, and $l_0 \in L$ is the initial location.
A transition in $\delta$ is a tuple $(l, \vec{w}, \vec{g}, l')$ where
$l$ and $l'$ are locations in $L$, $\vec{w} \in \mathbb{Z}^n$ represents the weights of the transition, and $\vec{g} \in (\mathbb{N} \times (\mathbb{N} \cup \{\omega\}))^n$ represents the guards of the transitions.
Given a transition $\delta$, we note $g_i[lo]$ the lower-bound for counter~$i$ and $g_i[up]$ the upper-bound.

A finite run in a $n$-counter automaton is a triple $(k, \mu_1, \mu_2)$, where $\mu_1: \{0, \ldots, k\} \to L$ and $\mu_2: \{0, \ldots, k\} \to \mathbb{Z}^n$, and such that:
\begin{itemize}
    \item $\mu_1(0) = l_0$ and $\mu_2(0) = (0, \ldots, 0)$
    \item for every $i < k$, if $\mu_1(i) = l$ and $\mu_2(i) = (c_1,\ldots, c_n)$, and $\mu_1(i+1) = l'$ and $\mu_2(i+1) = (c'_1,\ldots, c'_n)$, then there is 
$(l, \vec{w}, \vec{g}, l') \in \delta$, such that for all $0 \leq i \leq n$, we have $g_i[lo] \leq c_i \leq g_i[up]$
and $c'_i = c_i + w_i$.
\end{itemize}

The \emph{reachability problem $(\Gamma, t)$ in $n$-counter automata} asks, given a $n$-counter automaton  $\Gamma = (L, \delta, l_0)$ and a location $t \in L$, whether there is a finite run $(k, \mu_1, \mu_2)$ such that $\mu_1(k) = t$, and $\mu_2(k) = (0, \ldots, 0)$.

\camready{The following lemma can be easily proved using a reduction from $2$-counter machines~\cite{Minsky}.}

\begin{lemma}
The reachability problem in 2-counter automata is undecidable.
\end{lemma}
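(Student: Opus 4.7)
The plan is to reduce from the halting problem for Minsky's deterministic $2$-counter machines, which is classically undecidable~\cite{Minsky}. Recall that a Minsky machine has two non-negative integer counters and, at each location, an instruction of one of three kinds: increment a counter, decrement a counter, or test whether a counter equals zero and branch accordingly.

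The heart of the argument is a direct simulation using the weights $\vec{w}$ and guards $\vec{g}$ available in the $2$-counter automata of Section~\ref{sec:undecidability}. An increment of counter $i$ at location $l$ becomes a transition with $w_i = +1$, $w_j = 0$ for $j \neq i$, and the trivial guard $(0,\omega)$ on every coordinate. A decrement of counter $i$ uses $w_i = -1$ together with guard $g_i = (1, \omega)$, so the counter cannot become negative. A zero-test at location $l$ is implemented as two transitions out of $l$, both with zero weight: one with guard $g_i = (0,0)$ leading to the ``zero'' successor, and one with guard $g_i = (1, \omega)$ leading to the ``nonzero'' successor. Since these two guards are mutually exclusive and exhaustive on $\mathbb{N}$, the simulation is faithful: runs of the constructed automaton are in one-to-one correspondence with runs of the Minsky machine.

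The only remaining mismatch is that the reachability problem as defined above additionally requires both counters to equal zero at the target location. To accommodate this, I would augment the Minsky halting location $h$ with two self-loops, one per counter, of weight $-1$ and guard $(1,\omega)$, which let the automaton drain its counters after halting. Setting $t := h$, the $2$-counter automaton has a run reaching $t$ with counter valuation $(0,0)$ if and only if the Minsky machine halts, and the reduction is complete.

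The main step to check is that the zero-test encoding does not introduce spurious runs, but this is immediate from the two complementary guards ensuring exactly one enabled transition at each simulated step. Likewise, the cleanup self-loops at $h$ cannot interfere with the simulation since the Minsky halting location has no outgoing instructions and is only reachable in the simulation once the machine has actually halted. The whole argument is thus routine once the weight-guard encoding is in place, as suggested by the paper's remark that the reduction is easy.
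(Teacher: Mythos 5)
Your proposal is correct and carries out exactly the reduction from Minsky's two-counter machines that the paper merely alludes to without giving details: the weight/guard encodings of increment, decrement, and zero-test are all faithful to the paper's definition of guards as pre-transition interval constraints, and the draining self-loops at the halting location correctly handle the requirement that the target be reached with counter valuation $(0,0)$. No gaps.
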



\subsection{Undecidability of multi-resources careful cooperative rational synthesis}

We reduce the reachability problem in $2$-counter automata into the problem of careful cooperative rational synthesis with two resources and two players.

From a $2$-counter automaton $\Gamma = (L, \delta, l_0)$ and a target location $t$, we are going to build a game\linebreak $\game_{\Gamma, t} = \langle \states_{\Gamma, t}, (\states_1 \biguplus \states_2), \camready{s_{\Gamma, t}}, \{1,2\}, \edges_{\Gamma, t}, \camready{\ap_{\Gamma, t}, \lab_{\Gamma, t}} \rangle$ with costs $\bcst_{\Gamma, t}$ and objectives $\obj_1, \obj_2$, $\obj = \obj_1$, in such a way that a solution to the reachability problem exists iff a solution to the careful cooperative rational synthesis exists.

\paragraph{Construction}
We are going to use two players in this construction. Player~$1$'s role will be to build a solution, choosing the transitions to follow. Player~$2$'s role will be to ``check'' that the transitions are legitimate, making Player~$1$ fail in his tasks if a transition that does not respect the guards is taken.

In $\game_{\Gamma, t}$, we first add the three states: $\winone$ representing the winning state of Player~$1$, and $\wintwo$, and $\wintwo'$ representing the winning states of Player~$2$. They will be sink states, and it does not matter who controls them.

Each location in $\Gamma$ is also a state in $\game_{\Gamma, t}$, controlled by Player~$1$.

For each transition $\tau = (l, \vec{w}, \vec{g}, l')$ in $\delta$ we introduce two states $\tau_<$ and $\tau_>$, both controlled by Player~$2$, and a few transitions. 
Intuitively, the state $\tau_>$ will serve as a state in which Player~$2$ will ``check'' that the upper-guard is satisfied. (Player~$2$ will have the opportunity to win if it does not.) The state $\tau_<$ will serve for the system to ``check'' that the lower guard is satisfied. (The value of one of the resources will go below zero if it is not the case.)
There are four cases to consider; They are illustrated on Figure~\ref{fig:gadgets-transitions}: (\ref{fig:ff})~$g_1[up] \not= \omega$ and $g_2[up] \not= \omega$; (\ref{fig:fi})~$g_1[up] \not= \omega$ and $g_2[up] = \omega$; (\ref{fig:if})~$g_1[up] = \omega$ and $g_2[up] \not= \omega$; (\ref{fig:ii})~$g_1[up] = \omega$ and $g_2[up] = \omega$.

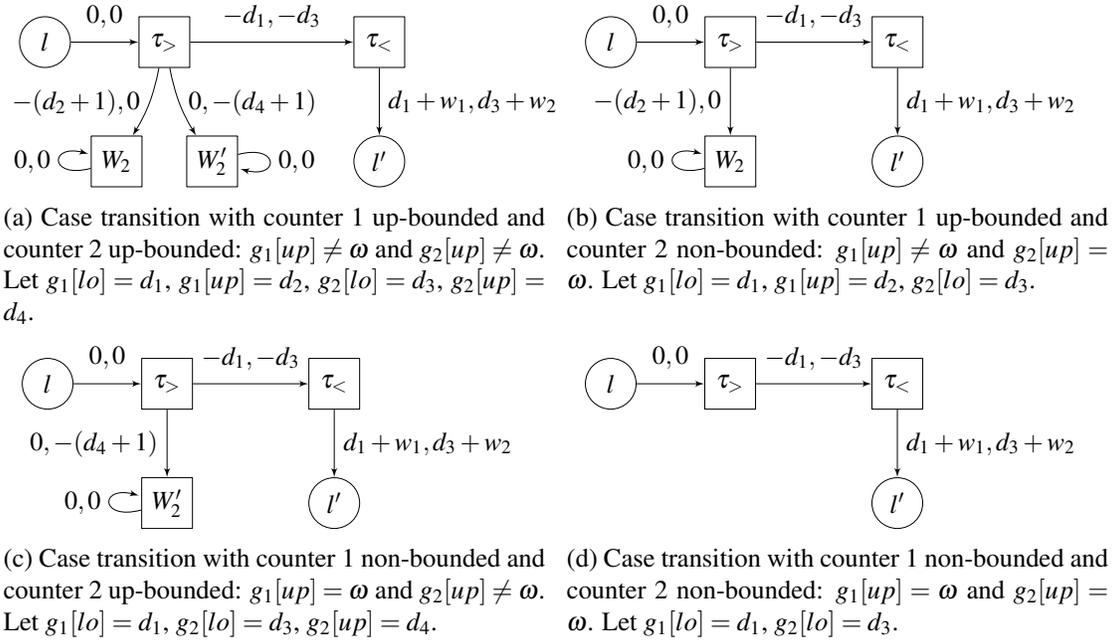
\begin{figure}
    \centering
    \begin{subfigure}[t]{0.45\textwidth}
        \scalebox{0.90}{

\begin{tikzpicture}[>=latex', join=bevel, initial text = , every node/.style=,
  minimum size=.75cm]
  \node (l) at (0bp, 0bp) [draw, circle]{$l$};
  \node (lmore) at (50bp, 0bp) [draw,  rectangle]{$\tau_>$};

  \node (win2) at (30bp, -50bp) [draw, rectangle]{$\wintwo$};
  \node (win22) at (70bp, -50bp) [draw, rectangle]{$\wintwo'$};
  
  \node (lless) at (140bp, 0bp) [draw, rectangle]{$\tau_<$};
  \node (lprime) at (140bp, -50bp) [draw, circle]{$l'$};
  \draw[->] (l) to node [above] {$0,0$} (lmore);
  \draw[->] (lmore) to node [above] {$-d_1,-d_3$} (lless);
  \draw[->] (lmore) [bend left=10] to node [left] {$-(d_2+1),0$} (win2);
  \draw[->] (lmore) [bend right=10] to node [right] {$0, -(d_4+1)$} (win22);
  \draw[->] (lless) to node [right] {$d_1+w_1, d_3+w_2$} (lprime);
  \draw (win2) edge [loop left] node [left] {$0,0$} ();
  \draw (win22) edge [loop right] node [right] {$0,0$} ();

\end{tikzpicture}}
        \caption{Case transition with counter~1 up-bounded and counter~2 up-bounded: $g_1[up] \not= \omega$ and $g_2[up] \not= \omega$. Let $g_1[lo] = d_1$, $g_1[up] = d_2$, $g_2[lo] = d_3$, $g_2[up] = d_4$.\label{fig:ff}}
    \end{subfigure}
    ~
    \begin{subfigure}[t]{0.45\textwidth}
        \centering
        \scalebox{0.90}{

\begin{tikzpicture}[>=latex', join=bevel, initial text = , every node/.style=,
  minimum size=.75cm]
  \node (l) at (0bp, 0bp) [draw, circle]{$l$};
  \node (lmore) at (50bp, 0bp) [draw,  rectangle]{$\tau_>$};
  \node (win2) at (50bp, -50bp) [draw, rectangle]{$\wintwo$};
  \node (lless) at (120bp, 0bp) [draw, rectangle]{$\tau_<$};
  \node (lprime) at (120bp, -50bp) [draw, circle]{$l'$};
  \draw[->] (l) to node [above] {$0,0$} (lmore);
  \draw[->] (lmore) to node [above] {$-d_1,-d_3$} (lless);
  \draw[->] (lmore) to node [left] {$-(d_2+1),0$} (win2);
  \draw[->] (lless) to node [right] {$d_1+w_1, d_3+w_2$} (lprime);
  \draw (win2) edge [loop left] node [left] {$0,0$} ();

\end{tikzpicture}}
        \caption{Case transition with counter~1 up-bounded and counter~2 non-bounded: $g_1[up] \not= \omega$ and $g_2[up] = \omega$. Let $g_1[lo] = d_1$, $g_1[up] = d_2$, $g_2[lo] = d_3$.\label{fig:fi}}
    \end{subfigure}

    \begin{subfigure}[t]{0.45\textwidth}
        \centering
        \scalebox{0.90}{

\begin{tikzpicture}[>=latex', join=bevel, initial text = , every node/.style=,
  minimum size=.75cm]
  \node (l) at (0bp, 0bp) [draw, circle]{$l$};
  \node (lmore) at (50bp, 0bp) [draw,  rectangle]{$\tau_>$};
  \node (win2) at (50bp, -50bp) [draw, rectangle]{$\wintwo'$};
  \node (lless) at (120bp, 0bp) [draw, rectangle]{$\tau_<$};
  \node (lprime) at (120bp, -50bp) [draw, circle]{$l'$};
  \draw[->] (l) to node [above] {$0,0$} (lmore);
  \draw[->] (lmore) to node [above] {$-d_1,-d_3$} (lless);
  \draw[->] (lmore) to node [left] {$0, -(d_4+1)$} (win2);
  \draw[->] (lless) to node [right] {$d_1+w_1, d_3+w_2$} (lprime);
  \draw (win2) edge [loop left] node [left] {$0,0$} ();

\end{tikzpicture}}
        \caption{Case transition with counter~1 non-bounded and counter~2 up-bounded: $g_1[up] = \omega$ and $g_2[up] \not= \omega$. Let $g_1[lo] = d_1$, $g_2[lo] = d_3$, $g_2[up] = d_4$.\label{fig:if}}
    \end{subfigure}
    ~
    \begin{subfigure}[t]{0.45\textwidth}
        \centering
        \scalebox{0.90}{

\begin{tikzpicture}[>=latex', join=bevel, initial text = , every node/.style=,
  minimum size=.75cm]
  \node (l) at (0bp, 0bp) [draw, circle]{$l$};
  \node (lmore) at (50bp, 0bp) [draw,  rectangle]{$\tau_>$};
  \node (lless) at (120bp, 0bp) [draw, rectangle]{$\tau_<$};
  \node (lprime) at (120bp, -50bp) [draw, circle]{$l'$};
  \draw[->] (l) to node [above] {$0,0$} (lmore);
  \draw[->] (lmore) to node [above] {$-d_1,-d_3$} (lless);
  \draw[->] (lless) to node [right] {$d_1+w_1, d_3+w_2$} (lprime);
\end{tikzpicture}}
        \caption{Case transition with counter~1 non-bounded and counter~2 non-bounded: $g_1[up] = \omega$ and $g_2[up] = \omega$. Let $g_1[lo] = d_1$, $g_2[lo] = d_3$.\label{fig:ii}}
    \end{subfigure}
    \caption{Gadgets to encode the transitions.\label{fig:gadgets-transitions}}
\end{figure}

We also need a gadget to ``check'' that solutions are runs that reach the target location with the values of the counters being zero. We introduce a state $t_?$ in $\game_{\Gamma, t}$, controlled by Player~$2$. See Figure~\ref{fig:gadget-target}.

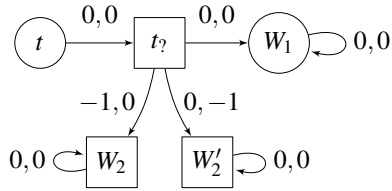
\begin{figure}[t]
\centering
\scalebox{0.90}{\begin{tikzpicture}[>=latex', join=bevel, initial text = , every node/.style=,
  minimum size=.75cm]
  \node (t) at (0bp, 0bp) [draw, circle]{$t$};
  \node (ttest) at (50bp, 0bp) [draw,  rectangle]{$t_?$};
  \node (win2) at (30bp, -50bp) [draw, rectangle]{$\wintwo$};
  
  \node (win22) at (70bp, -50bp) [draw, rectangle]{$\wintwo'$};

  \node (win1) at (100bp, 0bp) [draw, circle]{$\winone$};
  \draw[->] (t) to node [above] {$0,0$} (ttest);
  \draw[->] (ttest) to node [above] {$0,0$} (win1);
  \draw[->] (ttest) [bend left=10] to  node [left] {$-1,0$} (win2);
  \draw[->] (ttest) [bend right=10] to  node [right] {$0,-1$} (win22);

  \draw (win1) edge [loop right] node [right] {$0,0$} ();
  \draw (win2) edge [loop left] node [left] {$0,0$} ();
  \draw (win22) edge [loop right] node [right] {$0,0$} ();

\end{tikzpicture}}
\caption{Gadget to encode the target location $t$ with counter values $(0,0)$.\label{fig:gadget-target}}
\end{figure}

The five gadgets also completely specify the cost function $\bcst_{\Gamma, t}$.

The game $\game_{\Gamma, t}$ does not contain any other state or transition. 

\camready{The initial state is $s_{\Gamma, t}$. The set of propositions $\ap_{\Gamma, t}$ is $S_{\Gamma, t}$, and the labeling function $\lab_{\Gamma, t}$ is the identity.}

\medskip
In the game so obtained, the objective of Player~$1$ is to reach the state $\winone$ and the objective of Player~$2$ is to reach state $\wintwo$ or $\wintwo'$. 

\medskip
We now prove that the construction above can serve as a reduction from the reachability problem in $2$-counter automata into the problem of careful rational synthesis (with two common resources).
\begin{proposition}
Let $\Gamma = (L, \delta, l_0)$ be a $2$-counter automaton and let $t$ be a location in $L$. The reachability problem $(\Gamma, t)$ has a positive answer iff there is a solution to the careful cooperative rational synthesis in the game 
$\game_{\Gamma, t} = \langle \states_{\Gamma, t}, (\states_1 \biguplus \states_2), \camready{s_{\Gamma, t}}, \{1,2\}, \edges_{\Gamma, t}, \camready{\ap_{\Gamma, t}, \lab_{\Gamma, t}} \rangle$.
\end{proposition}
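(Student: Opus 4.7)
The plan is to prove both directions of the equivalence by exhibiting, in each case, the correspondence between counter values along a run of $\Gamma$ and resource values along the play induced in $\game_{\Gamma,t}$, together with a careful analysis of the deviations that are available to Player~$2$.

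For the $(\Rightarrow)$ direction, I assume a finite run $(k,\mu_1,\mu_2)$ of $\Gamma$ with $\mu_1(k)=t$ and $\mu_2(k)=(0,0)$, and build the profile $\prof=(\sigma_1,\sigma_2)$ that plays along this run: at each location $l_i=\mu_1(i)$, $\sigma_1$ picks the unique edge corresponding to the transition $\tau$ taken at step $i$ of the run, and at $t$ it moves to $t_?$; the strategy $\sigma_2$ always picks $\tau_<$ at every $\tau_>$ state and picks $\winone$ at $t_?$. A short induction on the prefix length, using that the cost of the edge $l\to\tau_>$ is $(0,0)$, of $\tau_>\to\tau_<$ is $(-d_1,-d_3)$, and of $\tau_<\to l'$ is $(d_1+w_1,d_3+w_2)$, shows that the resource vector at the visit to $l_i$ is exactly $\mu_2(i)$. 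The lower guards of the run then imply $\mu_2(i)_1\ge d_1$ and $\mu_2(i)_2\ge d_3$ at each transition, which is exactly what is needed for the resources to stay non-negative through $\tau_<$; at $t$ the resources are $(0,0)$, so the play reaches $\winone$ with $\out{\prof}\in\multiEn(\game_{\Gamma,t})\cap\obj$. For the careful Nash equilibrium condition, Player~$1$ already satisfies $\obj_1$ so there is nothing to show for him. For Player~$2$, the only edges that could satisfy $\obj_2$ go through $\wintwo$ or $\wintwo'$, which are reachable from a state $\tau_>$ (respectively from $t_?$). Because of the upper-guard condition of the run, the resource vector upon entering $\tau_>$ satisfies $c_1\le d_2$ and $c_2\le d_4$, so the edges $\tau_>\to\wintwo$ (cost $-(d_2+1)$ on resource~$1$) and $\tau_>\to\wintwo'$ (cost $-(d_4+1)$ on resource~$2$) both drive a resource below zero; at $t_?$, the values are $(0,0)$, so the edges costing $-1$ on one resource likewise fail $\multiEn$. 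Thus no deviation of Player~$2$ is in $\obj_2\cap\multiEn(\game_{\Gamma,t})$, and the careful NE condition holds vacuously.

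For the $(\Leftarrow)$ direction, I take a solution $\prof$ and read off a run of $\Gamma$ from $\play{\prof}$. Since $\out{\prof}\in\obj=\obj_1$, the play eventually enters $\winone$; by inspection of the gadgets this is only possible by traversing $t_?$, which in turn is only reachable from $t$, which is only reachable by crossing a sequence of gadgets $l_i\to\tau_>^{(i)}\to\tau_<^{(i)}\to l_{i+1}$ with $l_0$ as the initial location and $l_k=t$. The corresponding sequence of transitions $\tau_1,\ldots,\tau_k$ of $\Gamma$ produces counter values that, by the same cost computation as above, coincide with the resource values at each $l_i$. The fact that $\out{\prof}\in\multiEn$ evaluated at the entries to $\tau_<^{(i)}$ gives the lower guards $c_1\ge g_1[\mathit{lo}]$ and $c_2\ge g_2[\mathit{lo}]$. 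The careful NE condition for Player~$2$ then yields the upper guards and the terminal condition $(0,0)$: if the upper guard $c_j\le g_j[\mathit{up}]$ were violated at some $\tau_>^{(i)}$, or if $c_j\ge 1$ at $t_?$, Player~$2$ would have a single-edge deviation reaching $\wintwo$ or $\wintwo'$ whose energy cost is covered by the current resource vector and whose suffix is a self-loop of cost $(0,0)$, hence a deviation in $\obj_2\cap\multiEn(\game_{\Gamma,t})$, contradicting that $\out{\prof}\notin\obj_2$. This gives a valid run of $\Gamma$ reaching $t$ with counter values $(0,0)$.

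The main obstacle is the careful Nash equilibrium clause for Player~$2$: one must ensure that the \emph{only} deviations a priori interesting to Player~$2$ are the one-step moves to $\wintwo$ or $\wintwo'$ offered inside the gadgets, because any deviation of Player~$2$ that stays in $\tau_<$ states follows a forced edge and thus never leaves the simulation of $\Gamma$. The encoding of the guards as costs $-(d_j+1)$ is precisely what turns ``upper guard violated'' into ``safe winning deviation exists'' and vice versa, so a careful bookkeeping of this equivalence, together with the $t_?$ gadget playing the analogous role for the accepting condition $(0,0)$, is what makes the reduction tight in both directions.
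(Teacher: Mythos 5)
Your proposal is correct and follows essentially the same route as the paper's proof: the same location-by-location correspondence between counter values of $\Gamma$ and resource values in $\game_{\Gamma,t}$, the lower guards enforced by the $\multiEn$ condition at the $\tau_<$ states, and the upper guards and terminal $(0,0)$ condition enforced by the absence of careful profitable deviations of Player~$2$ to $\wintwo$ or $\wintwo'$. Your write-up is in fact slightly more explicit than the paper's (the induction establishing that the resource vector at $\mu_1(i)$ equals $\mu_2(i)$, and the observation that Player~$2$'s only nontrivial choices occur at the $\tau_>$ and $t_?$ states), but no new idea is involved.
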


\begin{proof}
\noindent\textbf{Left to right.} Suppose there is a solution to the reachability problem.
There is a run $(k, \mu_1, \mu_2)$ such that $\mu_1(k) = t$, and $\mu_2(k) = (0, 0)$. For every $i$, let 
$\tau^i = (l^i, (w^i_1, w^i_2), (g^i_1, g^i_2), l'^i)$ be the transition between $\mu_1(i)$ and $\mu_1(i+1)$. By definition of the reachability problem, we know that the guards of all of them are satisfied.
\camready{Since it reaches $t$ at step $k$, 
the following sequence is a run in $\game_{\Gamma, t}$: 
$\pi_\rho := \mu_1(0) \cdot \tau^0_> \cdot \tau^0_< \cdot \mu_1(1) \cdot  \tau^1_> \cdot \tau^1_< \cdots \mu_1(k) \cdot t_? \cdot (\winone)^\omega$.}
We argue that $\pi_\rho$ satisfies the objective, it never depletes the common resources, it is the play of a Nash equilibrium.
\begin{itemize}
\item
Since the play $\pi_\rho$ enters $\winone$, it is in $\obj$.

\item
Since $(k, \mu_1, \mu_2)$ is a solution to the reachability problem, by construction of $\game_{\Gamma, t}$, the values of the counters along the play $\pi_\rho$ never go below $0$.

\item
Along the play $\pi_\rho$, in every $\tau^i_<$, Player~$2$ chooses to go $\tau^i_>$. Since the guards are respected along the run $(k, \mu_1, \mu_2)$ in $\Gamma$, by construction of $\game_{\Gamma, t}$, Player~$2$ never has an opportunity in any state $\tau^i_>$ to deviate carefully (and profitably) to $\wintwo$ or $\wintwo'$.

In state $t_?$, Player~$2$ chooses to go to $\winone$. Since $\mu_2(k) = (0,0)$, the play $\pi_\rho$ enters the state $t_?$ with both counters being $0$. Thus, Player~$2$ cannot deviate carefully (and profitably) to $\wintwo$ or $\wintwo'$.

Since the play $\pi_\rho$ enters $\winone$, it is winning for Player~$1$, who has no incentive to deviate.

Hence, $\pi_\rho$ is the outcome of a Nash equilibrium.
\end{itemize}

So there is a solution to the careful cooperative rational synthesis.

\noindent\textbf{Right to left.} Suppose there is a solution to the problem of careful cooperative rational synthesis.

By definition of the problem of careful cooperative rational synthesis, there is a strategy profile $\prof$ such that:
\begin{enumerate}
    \item \label{en:ne} The profile $\prof$ is a Nash equilibrium.
    \item \label{en:good-energy} The value of each resource never goes below $0$.
    \item \label{en:obj} The play $\play{\prof}$ reaches the state $\winone$.
\end{enumerate}

We argue that the play $\play{\prof}$ of $\game_{\Gamma, t}$ reaches $t$ with the value of the resources being $(0,0)$.
\begin{enumerate}
\setcounter{enumi}{3}
\item \label{en:play-reaches-t-00}
Since $\play{\prof}$ reaches the state $\winone$, by construction of the game, $\play{\prof}$ is losing for Player~$2$. By construction also, $\play{\prof}$ goes through the state $t_?$. Since $\prof$ is a Nash equilibrium, it must be that the value of the resources when it goes through the state $t_?$ are $(0,0)$, otherwise, Player~$2$ would profitably choose to go to $\wintwo$ or $\wintwo'$ instead of $\winone$.
Suppose $\play{\prof}$ reaches $t$ at index $k_t$. I.e., $\play{\prof}[k_t] = t$, and $\bcst(\pref{\play{\prof}}{k_t}) = (0,0)$.
\end{enumerate}

Let $\rho_{\prof} = (k_t, \mu_1, \mu_2)$ be the finite run in $\Gamma$ that is the projection of $\pref{\play{\prof}}{k_t}$ onto $L$. We argue that $\rho_{\prof}$ is a solution to the reachability problem $(\Gamma, t)$. 
\begin{itemize}
    \item From item~\ref{en:play-reaches-t-00}, $\play{\prof}$ reaches $t$ at index $k_t$ with resource values $(0,0)$. So $\rho_{\prof}$ reaches the state $(t, (0,0))$.
    \item We argue that the upper guards are always respected along $\rho_{\prof}$. By item~\ref{en:obj} and by construction, we know that $\play{\prof}$ is not winning for Player~$2$. But since $\prof$ is a Nash equilibrium (item~\ref{en:ne}), Player~$2$ never has an opportunity to carefully (and profitably) deviate to $\wintwo$ or $\wintwo'$. So when going through a transition $\tau = (l, \vec{w}, \vec{g}, l')$, say at step $j$ along $\rho_{\prof}$, if $\mu_2(j) = (c_1,c_2)$ then $c_1 \leq g_1[up]$ and $c_2 \leq g_2[up]$.
    \item By construction, it follows from item~\ref{en:good-energy} that the lower guards are always respected along $\rho_{\prof}$.
\end{itemize}
\end{proof}

The next result follows at once.
\begin{theorem}
The problem of careful cooperative rational synthesis is undecidable, even with two players and two resources, and reachability objectives.
\end{theorem}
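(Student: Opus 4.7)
The plan is to derive the theorem as an immediate consequence of the preceding proposition together with the undecidability lemma for reachability in $2$-counter automata. First I would observe that the reduction given in the proposition is effective (in fact polynomial-time computable): the game $\game_{\Gamma, t}$ has $O(|L| + |\delta|)$ states (one per location, two ``checker'' states $\tau_<, \tau_>$ per transition, and a constant number of sinks $\winone, \wintwo, \wintwo', t_?$), and the two-dimensional cost function $\bcst_{\Gamma, t}$ is read off locally from each gadget, its entries depending only on the weights and the bounded guards $d_i$ of each transition, which are part of the input (encoded in binary, as the paper convention has it). The objectives $\obj_1$, $\obj_2$, and $\obj = \obj_1$ are reachability objectives of the sink states $\winone$ and $\wintwo \cup \wintwo'$.

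Next I would invoke the proposition to transfer: an instance $(\Gamma, t)$ has a positive answer for the reachability problem if and only if the corresponding instance $(\game_{\Gamma, t}, \bcst_{\Gamma, t}, \obj_1, \obj_2, \obj)$ admits a solution to careful cooperative rational synthesis. Consequently, any decision procedure for the latter would, via this computable reduction, decide reachability in $2$-counter automata, contradicting the lemma. Hence careful cooperative rational synthesis is undecidable.

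Finally I would verify that the reduction witnesses all the restrictions claimed in the theorem statement. By construction the state partition is $\states_1 \uplus \states_2$, so the game uses exactly two players; the cost function has exactly two dimensions, one per counter of $\Gamma$; and the only objectives used are reachability of the three sink states. I do not expect any step to be genuinely obstructive, because the real content, namely the correctness of the gadget-based encoding and the analysis of Player~$2$'s careful deviations to $\wintwo$ and $\wintwo'$, has already been established in the proposition. The theorem is essentially a packaging corollary that records the parameters (two players, two resources, reachability objectives) of the reduction just carried out.
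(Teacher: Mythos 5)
Your proposal is correct and matches the paper exactly: the paper derives this theorem immediately (``The next result follows at once'') from the preceding proposition together with the lemma on undecidability of reachability in $2$-counter automata, which is precisely your argument. Your additional remarks on the effectiveness of the reduction and the verification of the stated parameters (two players, two resources, reachability objectives) are correct and simply make explicit what the paper leaves implicit.
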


\section{Bounded resources and decidability}
\label{sec:bounded-resources}

In this section we consider the so-called bounded setting. Here each counter will be bounded from above by some bound $B$ and cannot store more that $B$. Intuitively, one can continue to charge a battery, but the energy exceeding the capacity will be lost in heat; just like one can continue to fill up a tank of water but it will spill over when the capacity is reached.

The main result of this section will be the decidability of the synthesis problem in this case.

Given a game $\gameful$, and multidimensional cost function $\bcstfull$, we fix a vector in $\bbN^d$ representing the maximal capacity of each counter. \camready{We use $\bcst(s,s')[i]$ to denote the $i$-th component of the tuple
$\bcst(s,s')$.}
Along a run when a counter
is at capacity its value cannot increase.
Formally, Assume that the capacity is given by the following vector
$\vec{B}= (B_1, \ldots, B_d)$. We define the operator $\oplus_{\vec{B}}$ over vectors in $\bbZ^d$ as follows:
 \begin{align*}
     (c_1, \ldots, c_d ) \oplus_{\vec{B}} (c'_1, \ldots, c'_d ) = (x_1, \ldots, x_d) 
     \text{ where } \forall 1\le i \le d,~ x_i = \min(c_i + \bcst(s,s')[i], B_i)
 \end{align*}
 
We can now define the cost vector along a history $h = \init\ s_1\ldots\ s_n$ inductively as follows
\begin{align*}
  \bcst(h) = \bcst(\init\ s_1\ldots\ s_{n-1}) \oplus_{\vec{B}} \bcst(s_j,s_{j+1})
\end{align*}

The decidability result is obtained through an unfolding of the arena. This unfolding 
constructs a multiplayer game without costs $\uGame$ where the set of states is 
$\uStates = \states \times \set{0,\ldots, B_1} \times \ldots \times \set{0,\ldots, B_d} \cup \set{\bot}$.\\
The set of edges is $\uEdges$ in $(\uStates \times \uStates) \cup (\uStates\times \set{\bot}) \cup \set{(\bot, \bot)}$ and is 
defined as follows:
\begin{align*}
    ((s, c_1, \ldots, c_d ),(s', c'_1, \ldots, c'_d)) \in \uEdges
\end{align*}
if
\begin{itemize}
\item $(s,s')$ in $\edges$
\item $(c_1, \ldots, c_d ) \oplus_{\vec{B}} \bcst(s,s') = (c'_1, \ldots, c'_d)$
\item $(c'_1, \ldots, c'_d) \ge (0, \ldots, 0)$
\end{itemize}
and 
\begin{align*}
    ((s, c_1, \ldots, c_d ),\bot)\in \uEdges
\end{align*}
if 
\begin{itemize}
\item there exists $s$ such that $(s,s')$ in $\edges$
\item for some counter value $c_i$ we have $c_i + \cst_i(s,s') < 0$.
\end{itemize}
\camready{Also $(\bot,\bot) \in \uEdges$.}

In this new game a state $(s, c_1, \ldots, c_d)$ belongs to player $i$ if $s$ belongs to player $i$.
Player 1 controls also the fresh state $\bot$.
The objective of each player is the same and the global specification in $\uGame$ is $\obj \land \lnot\Diamond \bot$.
Plays in this unfolding are infinite sequences of $\uStates$. In order to relate plays in $\game$ with plays in $\uGame$
we use the following projection $\pi$ 
\camready{defined over the set of histories as follows: first,
\begin{align*}
    \pi(\init) = 
      (\init,0,\ldots, 0)
\end{align*}
and for a history $h=\init\ s_1\ \ldots\ s_l$ in $\game$:
\begin{align*}
    \pi(\init\ s_1\ \ldots\ s_l) = \pi(\init\ s_1\ \ldots\ s_{l-1}) 
    (s,\bcst(\init\ s_1\ \ldots\ s_l))
\end{align*}}
We extend $\pi$ over the plays as expected and denote by $\pi^{-1}$ the inverse mapping.

For a play $\hat{\rho}$, we say that $\hat{\rho}$ satisfies the objective of player $i$ if $\pi^{-1}(\hat{\rho})$ satisfies $\obj_i$.
We will say that a play $\hat{\rho}$ satisfies $\obj \land \lnot\Diamond \bot$
if $\pi(\hat{\rho})$ satisfies $\obj$ and $\hat{\rho}$ satisfies $\lnot\Diamond \bot$.

\begin{proposition}\label{prop:correctness}
 There exists a solution to the careful synthesis if and only if 
 there exists a Nash equilibrium in the unfolding whose outcome satisfies $\obj \land \lnot\Diamond \bot$.
\end{proposition}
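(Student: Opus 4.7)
The plan is to establish a tight correspondence between strategy profiles of $\game$ and those of $\uGame$ via the projection $\pi$, and to show that under this correspondence the two properties in the statement are equivalent.

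First I would verify that $\pi$ restricts to a bijection between $\bot$-free histories of $\uGame$ and histories of $\game$ whose running resource vector is non-negative, and that it extends naturally to all histories of $\game$ by mapping the first step that would deplete a resource to the edge leading to $\bot$. This bijection lifts to strategies: any profile $\prof$ in $\game$ induces a profile $\uProf$ in $\uGame$ by setting $\hat{\sigma}_i(\pi(h)) = (\sigma_i(h), \vec{c}\,')$ on $\bot$-free histories, where $\vec{c}\,'$ is the $\oplus_{\vec B}$-update of the resource vector carried in the last state of $\pi(h)$, and extending trivially on histories ending in $\bot$ (where the only available edge is the self-loop). A routine induction on prefix length then shows that $\out{\uProf} = \pi(\out{\prof})$ whenever $\out{\prof} \in \multiEn(\game)$, and that otherwise $\out{\uProf}$ reaches $\bot$ precisely at the first index where a resource of $\out{\prof}$ would go negative.

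Next I would transport the three conditions across this correspondence: $\out{\prof} \in \multiEn(\game)$ matches $\out{\uProf} \models \lnot\Diamond\bot$; $\out{\prof} \in \obj$ matches $\out{\uProf} \models \obj$ via the definition of satisfaction through $\pi^{-1}$; and the equilibrium conditions match deviation by deviation. For the left-to-right direction, assuming $\prof$ is a careful synthesis solution, any profitable deviation $\hat{\sigma}'_i$ for player $i$ in $\uGame$ must yield a $\bot$-avoiding outcome (otherwise, by convention, it satisfies no player objective and cannot be strictly better than $\out{\uProf}$); its projection yields a careful deviation $\sigma'_i$ with $\devProf{\prof_{\coa i}, \sigma'_i} \in \obj_i \cap \multiEn(\game)$, forcing $\out{\prof} \in \obj_i$ and thus contradicting profitability. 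Symmetrically, for the right-to-left direction, any careful witness $\sigma'_i$ in $\game$ lifts to a $\bot$-avoiding deviation in $\uGame$ satisfying $\obj_i$, and the classical NE condition on $\uProf$ delivers $\out{\prof} \in \obj_i$.

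The delicate step, where the most care is needed, is the bookkeeping around the sink $\bot$: one must fix the convention that a play reaching $\bot$ satisfies no individual player's objective, since $\pi^{-1}$ is undefined on such plays. This convention is exactly what makes careful deviations in $\game$ correspond to $\bot$-avoiding deviations in $\uGame$, and it ensures that no strategic incentive can ever be created in $\uGame$ by crashing into $\bot$. Once this is nailed down, the remaining induction and case analysis are routine.
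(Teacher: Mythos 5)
Your proposal follows essentially the same route as the paper's proof: lift strategies across the projection $\pi$ between the original arena and the unfolding, and transport the energy condition to $\lnot\Diamond\bot$ and the objectives via $\pi^{-1}$. You are in fact more explicit than the paper on the one genuinely delicate point --- that plays reaching $\bot$ must count as satisfying no player's objective, so that the quantification over \emph{careful} deviations in $\game$ matches the unrestricted quantification over deviations in the plain Nash equilibrium of $\uGame$ --- which the paper's proof passes over with the bare assertion that each $\widehat{\sigma}_i$ ``ensures the same payoff'' as $\sigma_i$.
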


\begin{proof}
Let $\prof$ be a solution in $\game$, we construct $\uProf$ as follows:
\begin{align*}
    \uProf = (\widehat{\sigma}_i, \ldots, \widehat{\sigma}_n)
\end{align*}
such that each $\widehat{\sigma_i}$ is defined as follows:
\begin{align*}
    \widehat{\sigma}_i (\hat{h}) = \sigma_i(\pi(\hat{h}))
\end{align*}
where $\hat{h}$ is a history of $\uGame$, and $\sigma_i$ is the strategy of player $i$ in the profile $\prof$.
We argue that $\uProf$ is also a solution thanks to the following fact:
\begin{itemize}
    \item $\uProf$ is a Nash equilibrium since each $\widehat{\sigma}_i$ ensures the same payoff as $\sigma_i$.
    \item $\prof$ is solution, hence it ensures that the energy along its outcome never drops bellow 0 for all the counters, 
    hence by construction $\bot$ is never visited.
\end{itemize}

Let $\uProf$ be a solution  of $\uGame$, then we construct $\prof$ as follows:
\begin{align*}
    \prof = (\sigma_1, \ldots, \sigma_n)
\end{align*}
where for each history $h$,
\begin{align*}
    \sigma_i = \widehat{\sigma}_i(\pi(h))
\end{align*}
We argue that $\prof$ is also a solution thanks to the following fact:
\begin{itemize}
    \item $\prof$ is a Nash equilibrium since each $\sigma_i$ ensures the same payoff as $\widehat{\sigma}_i$.
    \item $\uProf$ is solution, hence it ensures that $\bot$ is never visited, therefore by construction
        the energy along the outcome of $\prof$ never drops below 0 for all the counters,
\end{itemize}
\end{proof}

\camready{By Proposition~\ref{prop:correctness}, we know that we can solve careful synthesis in the original arena by reducing it to plain rational synthesis in the unfolding. 
It is readily seen that the size of the unfolding of the arena defined above is exponential in the size of the original arena.
On the other hand, solving the cooperative rational synthesis with LTL objectives is in $\twoexptime$ in the size of the objectives formulas, and polynomial in the size of the arena~\cite{FKL10,KuSh22}. It follows that our problem is $\twoexptime$ when the counters are bounded.}
\begin{theorem}
The careful cooperative rational synthesis is $\twoexptime$-complete when the counters are bounded.
\end{theorem}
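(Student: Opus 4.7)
The plan is to split the proof into the standard two halves, membership and hardness. For membership, I would invoke Proposition~\ref{prop:correctness} directly: it reduces the careful cooperative rational synthesis problem on $\game$ with bounded resources to the plain cooperative rational synthesis problem on the unfolding $\uGame$, with objectives $\obj_i$ for each player~$i$ and global specification $\obj \land \lnot\Diamond\bot$. Since the unfolded state space is $\states \times \prod_{i=1}^{d}\set{0,\ldots,B_i}\cup\set{\bot}$ and each bound $B_i$ is encoded in binary, the size of $\uGame$ is at most exponential in the size of the input. The extra conjunct $\lnot\Diamond \bot$ adds only a constant to the size of the LTL system specification, and each player's objective $\obj_i$ remains the same LTL formula as in the input.

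I would then appeal to the known upper bound for plain cooperative rational synthesis with LTL objectives: by~\cite{FKL10,KuSh22}, the problem is solvable in time doubly exponential in the size of the LTL formulas and polynomial in the size of the arena. Plugging in the exponential-size unfolding as arena, but with input-size LTL formulas, the total running time remains doubly exponential in the size of the original input. This gives $\twoexptime$ membership. The main technical care here is just bookkeeping the sources of blowup: a single exponential from unfolding the bounded counter space, composed with a polynomial dependence on the arena inside the $\twoexptime$ algorithm, does not push us beyond $\twoexptime$.

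For the matching lower bound, I would reduce from Pnueli--Rosner reactive synthesis for LTL, known to be $\twoexptime$-hard~\cite{PR89}. Given an LTL specification $\varphi$ over input/output signals, I build a two-player turn-based arena on a single state per alternating round in which Player~$1$ (the system) chooses outputs and Player~$2$ (the environment) chooses inputs. I set $d=1$, $\vec{B}=(0)$, and the cost function identically zero, so that the careful constraint on resources is trivially satisfied on every play and the unfolding collapses back to the original arena. I give Player~$2$ a trivially satisfied objective (say, $\true$) so that the Nash-equilibrium constraint forces Player~$1$ to play a strategy whose outcome satisfies $\varphi$ against every response of Player~$2$. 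Taking the system objective $\obj = \varphi$, a solution to the careful cooperative rational synthesis problem then exists iff $\varphi$ is realizable in the Pnueli--Rosner sense.

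The step that requires the most care is the lower-bound reduction, since one must check that the cooperative flavour of rational synthesis (where the strategy profile is entirely chosen by the synthesizer) still captures the adversarial character of reactive synthesis; this is exactly why I equip Player~$2$ with the trivial objective $\true$, forcing every deviation by Player~$2$ to be profitable and thereby turning the Nash-equilibrium constraint into a universal quantification over Player~$2$'s strategies, as desired. Everything else is routine, and the two bounds together yield $\twoexptime$-completeness.
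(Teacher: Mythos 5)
Your membership argument is exactly the paper's: reduce via Proposition~\ref{prop:correctness} to plain cooperative rational synthesis on the exponential-size unfolding, then use the fact that the latter is doubly exponential in the LTL formulas but only polynomial in the arena, so the composed blow-up stays in $\twoexptime$. That half is correct (the paper in fact only argues membership and leaves hardness implicit).

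Your lower-bound reduction, however, has a genuine flaw: giving Player~$2$ the objective $\true$ makes the Nash-equilibrium constraint \emph{vacuous} rather than adversarial. The solution condition for Player~$2$ reads ``if some deviation $\sigma'_2$ yields a play in $\obj_2 \cap \multiEn(\game)$, then the equilibrium outcome is already in $\obj_2$''; with $\obj_2 = \true$ the conclusion always holds, so no constraint whatsoever is imposed on Player~$1$'s strategy. Since in the cooperative setting the synthesizer chooses \emph{both} strategies, your instance then asks only whether \emph{some} play of the arena satisfies $\varphi$ --- an LTL model-checking/satisfiability question, which is $\pspace$-complete, not $\twoexptime$-hard. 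You have the logic of ``profitable deviation'' inverted: to turn the equilibrium condition into a universal quantification over Player~$2$'s strategies you need Player~$2$'s objective to be \emph{violated} on the required outcome, e.g.\ $\obj_2 = \lnot\varphi$ while $\obj = \varphi$. Then no deviation of Player~$2$ may achieve $\lnot\varphi$, i.e.\ every response of Player~$2$ to $\sigma_1$ must satisfy $\varphi$, which is precisely Pnueli--Rosner realizability (this is the standard hardness argument for rational synthesis in \cite{FKL10}). With that correction, and keeping your choice of trivial costs and bound so that the energy condition is inert, the reduction goes through and the completeness claim is established.
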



\section{Conclusion}
As agents and robots are always more likely to roam the physical world, the formal tools to engineer them need to take into account the resource-sensitiveness of their activities.

We presented a model for autonomous and rational agents interacting in environments with multiple resources.
We focused on a problem of rational planning, rational synthesis, that consists in finding a non-cooperative equilibrium (a Nash equilibrium) that satisfies a system objective, and never depletes the resources.
We showed that this problem is undecidable.

We then proposed a variant where the storage capacity is bounded for all resources. We claim that this is promising for the applicability to real-world settings. The  storage of resources is indeed generally limited (energy power capacity of a battery, volume of a water tank, etc). Moreover, we proved that the problem of rational synthesis with LTL objectives becomes decidable in double-exponential time, which is no harder than plain controller synthesis for LTL specifications.

\medskip

In the future, we are interested in the study of problems to elaborate tools that better equip the engineers of agents and robots in resource-sensitive environments. In particular, we will investigate 
problems of \emph{parameterized} synthesis allowing an engineer to partially model a system, leaving some quantities unspecified (for example, we can leave unspecified some 
weights of transitions or the bounds of the resources), and with the aim of automatically completing the system in a way that it admits a solution to the synthesis problem.

\bibliographystyle{eptcs}
\bibliography{ref}

\begin{thebibliography}{10}
\providecommand{\bibitemdeclare}[2]{}
\providecommand{\surnamestart}{}
\providecommand{\surnameend}{}
\providecommand{\urlprefix}{Available at }
\providecommand{\url}[1]{\texttt{#1}}
\providecommand{\href}[2]{\texttt{#2}}
\providecommand{\urlalt}[2]{\href{#1}{#2}}
\providecommand{\doi}[1]{doi:\urlalt{https://doi.org/#1}{#1}}
\providecommand{\eprint}[1]{arXiv:\urlalt{https://arxiv.org/abs/#1}{#1}}
\providecommand{\bibinfo}[2]{#2}

\bibitemdeclare{article}{DBLP:journals/apin/AbateGHHKNPSW21}
\bibitem{DBLP:journals/apin/AbateGHHKNPSW21}
\bibinfo{author}{Alessandro \surnamestart Abate\surnameend},
  \bibinfo{author}{Julian \surnamestart Gutierrez\surnameend},
  \bibinfo{author}{Lewis \surnamestart Hammond\surnameend},
  \bibinfo{author}{Paul \surnamestart Harrenstein\surnameend},
  \bibinfo{author}{Marta \surnamestart Kwiatkowska\surnameend},
  \bibinfo{author}{Muhammad \surnamestart Najib\surnameend},
  \bibinfo{author}{Giuseppe \surnamestart Perelli\surnameend},
  \bibinfo{author}{Thomas \surnamestart Steeples\surnameend} \&
  \bibinfo{author}{Michael~J. \surnamestart Wooldridge\surnameend}
  (\bibinfo{year}{2021}): \emph{\bibinfo{title}{Rational verification:
  game-theoretic verification of multi-agent systems}}.
\newblock {\slshape \bibinfo{journal}{Appl. Intell.}}
  \bibinfo{volume}{51}(\bibinfo{number}{9}), pp. \bibinfo{pages}{6569--6584},
  \doi{10.1007/s10489-021-02658-y}.

\bibitemdeclare{inproceedings}{DBLP:conf/birthday/AlechinaL20}
\bibitem{DBLP:conf/birthday/AlechinaL20}
\bibinfo{author}{Natasha \surnamestart Alechina\surnameend} \&
  \bibinfo{author}{Brian \surnamestart Logan\surnameend}
  (\bibinfo{year}{2020}): \emph{\bibinfo{title}{State of the Art in Logics for
  Verification of Resource-Bounded Multi-Agent Systems}}.
\newblock In \bibinfo{editor}{Andreas \surnamestart Blass\surnameend},
  \bibinfo{editor}{Patrick \surnamestart C{\'{e}}gielski\surnameend},
  \bibinfo{editor}{Nachum \surnamestart Dershowitz\surnameend},
  \bibinfo{editor}{Manfred \surnamestart Droste\surnameend} \&
  \bibinfo{editor}{Bernd \surnamestart Finkbeiner\surnameend}, editors:
  {\slshape \bibinfo{booktitle}{Fields of Logic and Computation {III} - Essays
  Dedicated to Yuri Gurevich on the Occasion of His 80th Birthday}}, {\slshape
  \bibinfo{series}{Lecture Notes in Computer Science}} \bibinfo{volume}{12180},
  \bibinfo{publisher}{Springer}, pp. \bibinfo{pages}{9--29},
  \doi{10.1007/978-3-030-48006-6_2}.

\bibitemdeclare{article}{BACCHUS2000123}
\bibitem{BACCHUS2000123}
\bibinfo{author}{Fahiem \surnamestart Bacchus\surnameend} \&
  \bibinfo{author}{Froduald \surnamestart Kabanza\surnameend}
  (\bibinfo{year}{2000}): \emph{\bibinfo{title}{Using temporal logics to
  express search control knowledge for planning}}.
\newblock {\slshape \bibinfo{journal}{Artificial Intelligence}}
  \bibinfo{volume}{116}(\bibinfo{number}{1}), pp. \bibinfo{pages}{123--191},
  \doi{10.1016/S0004-3702(99)00071-5}.

\bibitemdeclare{inproceedings}{DBLP:conf/kr/BienvenuFM06}
\bibitem{DBLP:conf/kr/BienvenuFM06}
\bibinfo{author}{Meghyn \surnamestart Bienvenu\surnameend},
  \bibinfo{author}{Christian \surnamestart Fritz\surnameend} \&
  \bibinfo{author}{Sheila~A. \surnamestart McIlraith\surnameend}
  (\bibinfo{year}{2006}): \emph{\bibinfo{title}{Planning with Qualitative
  Temporal Preferences}}.
\newblock In \bibinfo{editor}{Patrick \surnamestart Doherty\surnameend},
  \bibinfo{editor}{John \surnamestart Mylopoulos\surnameend} \&
  \bibinfo{editor}{Christopher~A. \surnamestart Welty\surnameend}, editors:
  {\slshape \bibinfo{booktitle}{Proceedings, Tenth International Conference on
  Principles of Knowledge Representation and Reasoning, Lake District of the
  United Kingdom, June 2-5, 2006}}, \bibinfo{publisher}{{AAAI} Press}, pp.
  \bibinfo{pages}{134--144}.

\bibitemdeclare{inproceedings}{DBLP:journals/corr/abs-1303-0789}
\bibitem{DBLP:journals/corr/abs-1303-0789}
\bibinfo{author}{Nils \surnamestart Bulling\surnameend} \&
  \bibinfo{author}{Valentin \surnamestart Goranko\surnameend}
  (\bibinfo{year}{2013}): \emph{\bibinfo{title}{How to Be Both Rich and Happy:
  Combining Quantitative and Qualitative Strategic Reasoning about Multi-Player
  Games (Extended Abstract)}}.
\newblock In \bibinfo{editor}{Fabio \surnamestart Mogavero\surnameend},
  \bibinfo{editor}{Aniello \surnamestart Murano\surnameend} \&
  \bibinfo{editor}{Moshe~Y. \surnamestart Vardi\surnameend}, editors: {\slshape
  \bibinfo{booktitle}{Proceedings 1st International Workshop on Strategic
  Reasoning, {SR} 2013, Rome, Italy, March 16-17, 2013}}, {\slshape
  \bibinfo{series}{{EPTCS}}} \bibinfo{volume}{112}, pp.
  \bibinfo{pages}{33--41}, \doi{10.4204/EPTCS.112.8}.

\bibitemdeclare{incollection}{CIMATTI2008841}
\bibitem{CIMATTI2008841}
\bibinfo{author}{Alessandro \surnamestart Cimatti\surnameend},
  \bibinfo{author}{Marco \surnamestart Pistore\surnameend} \&
  \bibinfo{author}{Paolo \surnamestart Traverso\surnameend}
  (\bibinfo{year}{2008}): \emph{\bibinfo{title}{Chapter 22 Automated
  Planning}}.
\newblock In \bibinfo{editor}{Frank \surnamestart {van Harmelen}\surnameend},
  \bibinfo{editor}{Vladimir \surnamestart Lifschitz\surnameend} \&
  \bibinfo{editor}{Bruce \surnamestart Porter\surnameend}, editors: {\slshape
  \bibinfo{booktitle}{Handbook of Knowledge Representation}}, {\slshape
  \bibinfo{series}{Foundations of Artificial
  Intelligence}}~\bibinfo{volume}{3}, \bibinfo{publisher}{Elsevier}, pp.
  \bibinfo{pages}{841--867}, \doi{10.1016/S1574-6526(07)03022-2}.

\bibitemdeclare{inproceedings}{DBLP:conf/atal/ConduracheDOT21}
\bibitem{DBLP:conf/atal/ConduracheDOT21}
\bibinfo{author}{Rodica \surnamestart Condurache\surnameend},
  \bibinfo{author}{Catalin \surnamestart Dima\surnameend},
  \bibinfo{author}{Youssouf \surnamestart Oualhadj\surnameend} \&
  \bibinfo{author}{Nicolas \surnamestart Troquard\surnameend}
  (\bibinfo{year}{2021}): \emph{\bibinfo{title}{Rational Synthesis in the
  Commons with Careless and Careful Agents}}.
\newblock In \bibinfo{editor}{Frank \surnamestart Dignum\surnameend},
  \bibinfo{editor}{Alessio \surnamestart Lomuscio\surnameend},
  \bibinfo{editor}{Ulle \surnamestart Endriss\surnameend} \&
  \bibinfo{editor}{Ann \surnamestart Now{\'{e}}\surnameend}, editors: {\slshape
  \bibinfo{booktitle}{{AAMAS} '21: 20th International Conference on Autonomous
  Agents and Multiagent Systems, Virtual Event, United Kingdom, May 3-7,
  2021}}, \bibinfo{publisher}{{ACM}}, pp. \bibinfo{pages}{368--376},
  \doi{10.5555/3463952.3464000}.

\bibitemdeclare{inproceedings}{CFGR16}
\bibitem{CFGR16}
\bibinfo{author}{Rodica \surnamestart Condurache\surnameend},
  \bibinfo{author}{Emmanuel \surnamestart Filiot\surnameend},
  \bibinfo{author}{Raffaella \surnamestart Gentilini\surnameend} \&
  \bibinfo{author}{Jean{-}Fran{\c{c}}ois \surnamestart Raskin\surnameend}
  (\bibinfo{year}{2016}): \emph{\bibinfo{title}{The Complexity of Rational
  Synthesis}}.
\newblock In: {\slshape \bibinfo{booktitle}{43rd International Colloquium on
  Automata, Languages, and Programming, {ICALP} 2016, July 11-15, 2016, Rome,
  Italy}}, pp. \bibinfo{pages}{121:1--121:15},
  \doi{10.4230/LIPIcs.ICALP.2016.121}.

\bibitemdeclare{article}{Doherty_Kvarnstram_2001}
\bibitem{Doherty_Kvarnstram_2001}
\bibinfo{author}{Patrick \surnamestart Doherty\surnameend} \&
  \bibinfo{author}{Jonas \surnamestart Kvarnstram\surnameend}
  (\bibinfo{year}{2001}): \emph{\bibinfo{title}{TALplanner: A Temporal
  Logic-Based Planner}}.
\newblock {\slshape \bibinfo{journal}{AI Magazine}}
  \bibinfo{volume}{22}(\bibinfo{number}{3}), pp. \bibinfo{pages}{95--102},
  \doi{10.1609/aimag.v22i3.1581}.

\bibitemdeclare{incollection}{Farwer2002}
\bibitem{Farwer2002}
\bibinfo{author}{Berndt \surnamestart Farwer\surnameend}
  (\bibinfo{year}{2002}): \emph{\bibinfo{title}{$\omega$-Automata}}.
\newblock In \bibinfo{editor}{Erich \surnamestart Gr{\"a}del\surnameend},
  \bibinfo{editor}{Wolfgang \surnamestart Thomas\surnameend} \&
  \bibinfo{editor}{Thomas \surnamestart Wilke\surnameend}, editors: {\slshape
  \bibinfo{booktitle}{Automata Logics, and Infinite Games: A Guide to Current
  Research}}, \bibinfo{publisher}{Springer Berlin Heidelberg},
  \bibinfo{address}{Berlin, Heidelberg}, pp. \bibinfo{pages}{3--21},
  \doi{10.1007/3-540-36387-4_1}.

\bibitemdeclare{inproceedings}{FKL10}
\bibitem{FKL10}
\bibinfo{author}{Dana \surnamestart Fisman\surnameend}, \bibinfo{author}{Orna
  \surnamestart Kupferman\surnameend} \& \bibinfo{author}{Yoad \surnamestart
  Lustig\surnameend} (\bibinfo{year}{2010}): \emph{\bibinfo{title}{Rational
  Synthesis}}.
\newblock In: {\slshape \bibinfo{booktitle}{Tools and Algorithms for the
  Construction and Analysis of Systems, 16th International Conference, {TACAS}
  2010, Held as Part of the Joint European Conferences on Theory and Practice
  of Software, {ETAPS} 2010, Paphos, Cyprus, March 20-28, 2010. Proceedings}},
  pp. \bibinfo{pages}{190--204}, \doi{10.1007/978-3-642-12002-2_16}.

\bibitemdeclare{inproceedings}{KuSh22}
\bibitem{KuSh22}
\bibinfo{author}{Orna \surnamestart Kupferman\surnameend} \&
  \bibinfo{author}{Noam \surnamestart Shenwald\surnameend}
  (\bibinfo{year}{2022}): \emph{\bibinfo{title}{The Complexity of LTL Rational
  Synthesis}}.
\newblock In \bibinfo{editor}{Dana \surnamestart Fisman\surnameend} \&
  \bibinfo{editor}{Grigore \surnamestart Rosu\surnameend}, editors: {\slshape
  \bibinfo{booktitle}{Tools and Algorithms for the Construction and Analysis of
  Systems}}, \bibinfo{publisher}{Springer International Publishing},
  \bibinfo{address}{Cham}, pp. \bibinfo{pages}{25--45},
  \doi{10.1007/978-3-030-99524-9_2}.

\bibitemdeclare{article}{Minsky}
\bibitem{Minsky}
\bibinfo{author}{Marvin~L. \surnamestart Minsky\surnameend}
  (\bibinfo{year}{1961}): \emph{\bibinfo{title}{Recursive Unsolvability of
  Post's Problem of "Tag" and other Topics in Theory of Turing Machines}}.
\newblock {\slshape \bibinfo{journal}{Annals of Mathematics}}
  \bibinfo{volume}{74}(\bibinfo{number}{3}), pp. \bibinfo{pages}{437--455},
  \doi{10.2307/1970290}.

\bibitemdeclare{article}{Nash51}
\bibitem{Nash51}
\bibinfo{author}{John~F. \surnamestart Nash\surnameend} (\bibinfo{year}{1951}):
  \emph{\bibinfo{title}{{Non-Cooperative Games}}}.
\newblock {\slshape \bibinfo{journal}{The Annals of Mathematics}}
  \bibinfo{volume}{54}, pp. \bibinfo{pages}{286--295}, \doi{10.2307/1969529}.

\bibitemdeclare{article}{DBLP:journals/logcom/NguyenALR18}
\bibitem{DBLP:journals/logcom/NguyenALR18}
\bibinfo{author}{Hoang~Nga \surnamestart Nguyen\surnameend},
  \bibinfo{author}{Natasha \surnamestart Alechina\surnameend},
  \bibinfo{author}{Brian \surnamestart Logan\surnameend} \&
  \bibinfo{author}{Abdur \surnamestart Rakib\surnameend}
  (\bibinfo{year}{2018}): \emph{\bibinfo{title}{Alternating-time temporal logic
  with resource bounds}}.
\newblock {\slshape \bibinfo{journal}{J. Log. Comput.}}
  \bibinfo{volume}{28}(\bibinfo{number}{4}), pp. \bibinfo{pages}{631--663},
  \doi{10.1093/logcom/exv034}.

\bibitemdeclare{inproceedings}{DBLP:conf/focs/Pnueli77}
\bibitem{DBLP:conf/focs/Pnueli77}
\bibinfo{author}{Amir \surnamestart Pnueli\surnameend} (\bibinfo{year}{1977}):
  \emph{\bibinfo{title}{The Temporal Logic of Programs}}.
\newblock In: {\slshape \bibinfo{booktitle}{18th Annual Symposium on
  Foundations of Computer Science, Providence, Rhode Island, USA, 31 October -
  1 November 1977}}, \bibinfo{publisher}{{IEEE} Computer Society}, pp.
  \bibinfo{pages}{46--57}, \doi{10.1109/SFCS.1977.32}.

\bibitemdeclare{inproceedings}{PR89}
\bibitem{PR89}
\bibinfo{author}{Amir \surnamestart Pnueli\surnameend} \& \bibinfo{author}{Roni
  \surnamestart Rosner\surnameend} (\bibinfo{year}{1989}):
  \emph{\bibinfo{title}{On the Synthesis of a Reactive Module}}.
\newblock In: {\slshape \bibinfo{booktitle}{Conference Record of the Sixteenth
  Annual {ACM} Symposium on Principles of Programming Languages, Austin, Texas,
  USA, January 11-13, 1989}}, pp. \bibinfo{pages}{179--190},
  \doi{10.1145/75277.75293}.

\bibitemdeclare{inproceedings}{ummels08}
\bibitem{ummels08}
\bibinfo{author}{Michael \surnamestart Ummels\surnameend}
  (\bibinfo{year}{2008}): \emph{\bibinfo{title}{The Complexity of Nash
  Equilibria in Infinite Multiplayer Games}}.
\newblock In \bibinfo{editor}{Roberto \surnamestart Amadio\surnameend}, editor:
  {\slshape \bibinfo{booktitle}{Foundations of Software Science and
  Computational Structures}}, \bibinfo{publisher}{Springer Berlin Heidelberg},
  \bibinfo{address}{Berlin, Heidelberg}, pp. \bibinfo{pages}{20--34},
  \doi{10.1007/978-3-540-78499-9_3}.

\end{thebibliography}
\end{document}